\tikzstyle{every node}=[circle, draw, fill=black!50,
\numberwithin{equation}{section}
\newlist{Properties}{enumerate}{3}
\newtheorem{claim}{Claim}
\theoremstyle{definition}
\newtheorem{definition}{Definition}[section]
\newcommand{\mS}{\mathcal S}
\newcommand{\cZ}{\mathcal Z}
\newcommand{\cP}{\mathcal P}
\newcommand{\cD}{\mathcal D}
\newcommand{\Nm}{N_{\mathrm{max}}}
\newcommand{\link}{\prec\!\!\ast\;}
\newcommand{\ql}{QL}
\newcommand{\pql}{PQL}
\newcommand{\cDnk}{\mathcal{D}_n^k}
\newcommand{\cQnk}{\mathcal{Q}_n^k}
\newcommand{\cPnk}{\mathcal{P}_{n}^k}
\newcommand{\cQpnk}{\mathcal{Q}_{p,n}^k}
\newcommand{\Ppql}{\mathcal{V}_n}
\newcommand{\vq}{\vec{q}}
\newcommand{\lpp}{\mathcal{Z}_n^{(L)}}
\newcommand{\cLn}{\mathcal{Q}^2_n}
\newcommand{\cLpn}{\mathcal{Q}^2_{p,n}}
\newcommand{\vqs}{\mathcal{V}^*_{\vq}}
\newcommand{\krd}{\mathcal{T}}
\newcommand{\sts}{path-sum}
\newcommand{\cc}{c}
\newcommand{\cs}{\hat{c}}
\begin{document}

\date{}
\title{Entropy and the Link Action in the\\ Causal Set Path-Sum}
\author{Abhishek Mathur, Anup Anand Singh and Sumati Surya\\ {\small\it Raman Research Institute, CV Raman Ave, Sadashivanagar, Bangalore 560080}}
\maketitle
\abstract{In causal set theory the gravitational path integral is replaced by a {\sts} over a sample space $\Omega_n$ of $n$-element causal sets. The contribution from non-manifold-like orders dominates $\Omega_n$ for large $n$ and therefore must be tamed by a suitable action in the low energy limit of the theory. We extend the work of Loomis and Carlip on the contribution of sub-dominant bilayer orders to the causal set {\sts} and show that the ``link action" suppresses the dominant Kleitman-Rothschild orders for the same range of parameters.}
\setlength{\parskip}{0.3em}
\section{Introduction}
In any theory of quantum gravity the transition from the deep quantum regime to the semi-classical regime requires the
suppression of non-classical ``quantum spacetimes". In the causal set approach to quantum gravity {\cite{Bombelli:1987aa,Surya:2019ndm}}, a quantum spacetime corresponds to a {\sl{causal set}} or locally finite order\footnote{In this work we shall use the term ``order" instead of ``poset".}, and the path integral is replaced by a {\sts} over a sample space of causal sets.

As shown in \cite{kleitman1}, if one were to randomly pick an order from the sample space $\Omega_n$ of finite $n$-element orders, it would overwhelmingly be a ``Kleitman-Rothschild'' (KR) order as $n$ becomes very large. A KR order has three levels with approximately $n/4$ elements in the top and bottom levels, $n/2$ elements in the middle level, and such that every element in the top level and the bottom level is linked to approximately half of the elements in the middle level.  In causal set theory (CST) a causal set is said to be manifold-like only if it can be obtained from a (typical) Poisson sprinkling into a spacetime. Thus KR orders are far from manifold-like. This poses a challenge to CST, since continuum-like dynamics must arise from the fundamentally discrete dynamics in the semi-classical limit. The number of KR orders goes as $\sim 2^{\frac{n^2}{4}+\frac{3n}{2} + o(n)}$, and is the dominant entropic contribution to the CST {\sts}.  This entropy therefore needs to be suppressed in the semi-classical limit by an appropriate choice of action.

In addition to the KR orders, as shown by Dhar \cite{dhar1}, there is a hierarchy of sub-dominant ``$k$-layer"  orders
which are also not manifold-like. Of these, the next dominant contribution to the entropy of the {\sts} comes from the
{bilayer} orders. In \cite{carlip}, {Loomis and Carlip showed that the discrete Einstein-Hilbert or
  Benincasa-Dowker-Glaser (BDG) action} \cite{benincasa,glaser1,glaser2}  suppresses all {bilayer} orders in the
CST {\sts}.  For  {{bilayer}} orders the BDG action simplifies to the ``link''  action {$\mS_L\propto N_0 $,
  where $N_0$ counts the number of links in the causal set, thus making the counting of ``iso-action'' causal sets
  considerably simpler. This is of course not true in general. }  In this work we explicitly show how the analysis in
\cite{carlip} carries over to a larger class of causal
sets which include the KR orders, when the BDG action is replaced
by the link action. While the {BDG} action limits to the {Einstein-Hilbert} action as $n\rightarrow\infty$, the link action can be considered a more natural choice from the order theoretic perspective. These two points of view
are in principle compatible, since one expects higher order corrections to the Einstein-Hilbert action. The (linear) link action can be considered to be one such possible correction\footnote{See \cite{astrid} for a discussion on using functional renormalisation group ideas to causal set actions.}.

In Section \ref{sec:prelim}, we first define what is meant by a {\sl level} as well as a {\sl $k$-layer}. In the
literature these terms are often used interchangeably, although they are distinct. More confusingly, the same
terminology is used in subtly different ways, for example in \cite{dhar1} and \cite{promel} versus \cite{kleitman1}. We
distinguish these different usages by defining two new types of layers: the quasi-layers ({\ql}s) and the
pseudo-quasi-layers ({\pql}s) and the associated classes of causal sets. In Section \ref{sec:csaction} we give a brief review of both the {BDG} and the link actions.

We present our main results in Section \ref{sec:3p} after reviewing the work of \cite{carlip} in the language of {\ql}s and
{\pql}s in Section \ref{sec:revlc}. In Section \ref{sec:3pql} we show that a subset $\cP^*_{\vq,p,n}\subset\Omega_n$
which contains ``typical" naturally labelled KR orders with fixed level sizes is suppressed by the link action to
leading order. This is also true of any naturally labelled $k$-{\pql} order with fixed layer size. Surprisingly, while
the counting arguments are different from those used in \cite{carlip}, we find the leading order contribution to the {\sts} to
be the same, so that  their results carry over trivially. In Section \ref{section:klevels}, we expand the analysis to $k$-{\ql} orders for any $k$ (and therefore all KR orders when $k>2$) and again find that the leading order contribution is the same as that found in \cite{carlip}.

In order to emphasise the non-triviality of these results, we use a very different subset of $\Omega_n$ in Section \ref{sec:krlike}. This ``KR + dust" subset consists of ``typical" $n'$-KR orders with fixed layer sizes for $n'\leq n$, with the remaining $n-n'$ elements forming an antichain or {\sl{dust}}. Unlike our earlier results, here we find that the leading order contribution to the CST {\sts} is {\it{not}} suppressed by the link action. We summarise our results in Section \ref{sec:con}, and discuss some of the open questions.

\section{Preliminaries}\label{sec:prelim}
We consider the sample space $\Omega_n$ of finite $n$-element posets or {\sl{orders}}, which are {\sl{labelled}} over the set of $n$ integers as in \cite{kleitman1,dhar1,promel,brightwell}\footnote{As discussed in \cite{henson}, the labelling introduces a factor of at most $n!$ which is sub-dominant to the entropic factor of $2^{n^2/4}$. Hence much of our analysis carries over to the unlabelled case, unless stated otherwise, as in Sec.~\ref{sec:3pql}.}. An order $\cc\in\Omega_n$ is said to be {\sl{naturally labelled}} if $\forall\;e_r,e_s\in \cc,\;e_r\prec e_s\Rightarrow r<s$.

As in standard CST terminology: (i) a {\sl minimal element} in $\cc$ is one with no preceding element (ii) a {\sl link}
$\link$ is a relation not implied by transitivity, i.e.,  $e_r \link e_s$ if  $\nexists\,e_t\in \cc $ such that $ e_r
\prec e_t \prec e_s$ (iii) an {\sl{order interval}} is the set $I[e_r,e_s]:=\{e_t\in \cc\,|\,e_r\prec e_t\prec e_s\}$
and is called a {\sl{$j$-element order interval}} if $|I[e_r,e_s]|=j$.

{In what follows, we give the standard definitions of a {\sl level} in a causal set as well as some new definitions of
  different types of {\sl layers} in a causal set. We  find this new  classification of  causal sets useful in
  obtaining the cardinality of iso-action causal sets. We  give several examples of these along the way. All the
  figures shown are {\sl Hasse diagrams} which only show the link relations (the rest being implied by
  transitive closure), with the directed relation going from the bottom of the figure (past) to the top (future).}

\begin{definition} \label{def:level}
The {\sl{level}} $L_j$ of an order $\cc$, with  $j = 1, 2, 3, \dots k$, is the set of minimal elements that remain after
deleting all elements in levels $L_m$,  $m < j$. In particular, $L_1$ contains all the minimal elements of $\cc$ \cite{kleitman1} (see Fig.~\ref{fig:levels}).
\end{definition}
\begin{figure}[h]
\begin{center}
\begin{tabular}{c}
\includegraphics{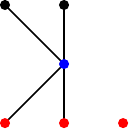}
\end{tabular}
\caption{The Hasse diagram of a  $6$-element order with $3$ levels.  Level $1$ elements are coloured red, level $2$ blue
  and level $3$ black. }\label{fig:levels}
\end{center}
\end{figure}
{As is obvious from this definition, a {level} can be assigned to {\it any} causal set.}

\begin{definition}\label{def:kr}
A {\sl{KR order}} has three levels $L_1,L_2,L_3$ satisfying the following properties (see Fig.~\ref{fig:krorders}):
\begin{enumerate}
\item $|L_1|,|L_3|=n/4+o(n)$ and $|L_2|=n/2+o(n)$.
\item $e_r\link e_s$ and $e_r\in L_j$ implies $e_s\in L_{j+1}$.
\item Each element in a level $L_j$ is connected to asymptotically half of the elements in $L_{j-1}$ and half of the
  elements in $L_{j+1}$.
\item For all $e_r\in L_1$ and $e_s\in L_3$, $e_r\prec e_s$.
\end{enumerate}
\end{definition}
{\begin{figure}[h]
\begin{center}
\begin{tabular}{c}
\includegraphics{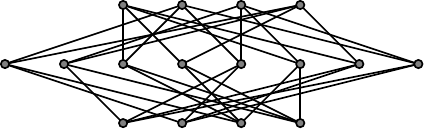}
\end{tabular}
\caption{An example of a  $16$-element order, which satisfies the KR property.}\label{fig:krorders}
\end{center}
\end{figure}}
Def.~\ref{def:kr}-4 is not explicitly stated in the original paper of Kleitman and Rothschild \cite{kleitman1}. However
since the dominant contribution comes from those orders satisfying all the conditions in Def.~\ref{def:kr}, this
additional condition was imposed in \cite{dhar1,promel,brightwell}.  Dhar showed that in addition to the KR orders,
there is a hierarchy of {\sl k-layer orders} which are subdominant \cite{dhar1}.

\begin{definition}\label{def:layer-map}  We will refer to the  map $ \zeta: \{e_1, e_2, \ldots e_n\} \rightarrow \{ 1,
\ldots, k \} $ as a  {\sl $k$-layer-map}.
\end{definition}

\begin{definition}\label{def:layer}
\cite{dhar1,promel,brightwell} In a  {\sl{k-layer order}}  $\cc\in\Omega_n$ it is possible to assign a  {\sl $k$-layer-map} $\zeta(e_r) \in
\{1,2,\dots,k\}$  to each element $e_r\in \cc$, such that:
\begin{enumerate}
\item $e_r \prec e_s \Rightarrow \zeta(e_r)<\zeta(e_s)$.
\item $\zeta(e_s)>\zeta(e_r)+1 \Rightarrow e_r \prec e_s$.
\end{enumerate}
Let $\cD_n^k$ denote the set of these orders.
\end{definition}

 It is clear that the class of $k$-layer orders is special and that not every causal set is a $k$-layer
 order for any choice of $k$, as shown in Fig.~\ref{fig:layerorders}.
{\begin{figure}[h]
\begin{center}
\begin{tabular}{c c c}
\includegraphics{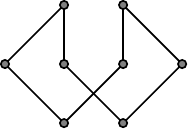}
\includegraphics{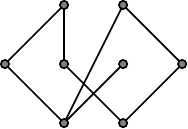}
\includegraphics{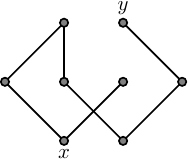}
\end{tabular}
\caption{The first two orders are  $3$-layer orders while the third one is not a $k$-layer order for {\it any}
  $k$. While it is possible to assign a map $\zeta: c \rightarrow \mathbb N$ which satisfies Condition 1 of Def. \ref{def:layer}, as shown in
  the figure, it cannot
  satisfy Condition 2,  since element $x$ with $\zeta(x)=1$ and element $y$ with  $\zeta(y)=3$ are not related. \label{fig:layerorders}}
\end{center}
\end{figure}}

In this work we will need to consider two new distinct classes of causal sets, also constructed via $k$-layer-maps, which
are distinct from the set $\cD_n^k$.

\begin{definition}\label{def:cd}
A subset $\cs\subset \cc$ is {\sl{causally disconnected}} if there exists no relation between elements of $\cs$ and its complement $\cs^c$ in $\cc$. $\cs$ is an {\sl{irreducible causally disconnected}} subset of $\cc$ if further, it contains no non-trivial causally disconnected proper subsets.
\end{definition}

\begin{definition}\label{def:ql}
In a {\sl{$k$-quasi-layer ($k$-QL)}} order $\cc\in\Omega_n$  it is possible to assign a $k$-layer-map $\eta$, such that:
\begin{enumerate}
\item For $e_r, e_s\in \cc$, if $e_r \link e_s$ then ${\eta}(e_s) = {\eta}(e_r) + 1$.
\item For every causally disconnected subset $\cs\subset\cc,\;\exists\, e_\alpha\in\cs$ such that ${\eta}(e_\alpha) = 1$.
\end{enumerate}
We will refer to this $k$-layer-map assignment as {\sl quasi-layers} (QL). Let $\cQnk$ denote the set of $k$-{\ql} orders (see Fig.~\ref{fig:qlorders}).
\end{definition}
{\begin{figure}[h]
\begin{center}
\begin{tabular}{c c c}
\includegraphics{DharOrder-a.pdf}
\includegraphics{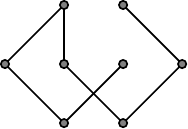}
\includegraphics{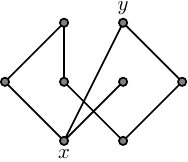}
\end{tabular}
\caption{The first two orders are  $3$-{\ql} orders while the third one is not a $k$-{\ql} order for {\it any}
  $k$ because element $x$ with $\eta(x)=1$ and element $y$ with $\eta(y)=3$ are related by a link thus violating
  Condition 1 of Def. \ref{def:ql}.\label{fig:qlorders}}
\end{center}
\end{figure}}

\begin{claim}\label{claim1}
There is a unique assignment of {\ql}s for any $\cc\in\cQnk$, i.e., every $k$-QL order is a unique labelled order in $\Omega_n$.
\end{claim}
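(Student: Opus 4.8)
The plan is to establish uniqueness directly: assuming $\cc\in\cQnk$ carries two valid $\ql$ assignments $\eta,\eta'$ (existence of at least one being guaranteed by membership in $\cQnk$), I would show $\eta=\eta'$ elementwise. The bridge between the two defining conditions of Def.~\ref{def:ql} is the undirected graph $G(\cc)$ on the elements of $\cc$ whose edges join $e_r,e_s$ whenever $e_r\link e_s$, irrespective of direction.

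First I would prove that a subset $\cs\subset\cc$ is causally disconnected (Def.~\ref{def:cd}) if and only if $\cs$ is a union of connected components of $G(\cc)$. One direction is immediate, since a link is in particular a relation, so no relation between $\cs$ and $\cs^c$ forces no $G(\cc)$-edge between them. For the converse I would use that in a finite order every relation $e_r\prec e_s$ refines to a saturated chain $e_r=e_0\link e_1\link\cdots\link e_m=e_s$; if $e_r\in\cs$ and $e_s\in\cs^c$ this chain crosses from $\cs$ to $\cs^c$, yielding a $G(\cc)$-edge between the two, a contradiction. Thus the irreducible causally disconnected subsets are exactly the connected components of $G(\cc)$, and every connected component is a causally disconnected subset to which Def.~\ref{def:ql}-2 applies.

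Next I would invoke Def.~\ref{def:ql}-1: for any edge $e_r\link e_s$ of $G(\cc)$ we have $\eta(e_s)-\eta(e_r)=1=\eta'(e_s)-\eta'(e_r)$, hence $(\eta-\eta')(e_r)=(\eta-\eta')(e_s)$ along every edge, so $\eta-\eta'$ is constant on each connected component of $G(\cc)$. Fix a component $\cs$ and let $c$ denote the common value of $\eta-\eta'$ on $\cs$. By Def.~\ref{def:ql}-2 applied to $\cs$ there exist $e_\alpha,e_\beta\in\cs$ with $\eta(e_\alpha)=1$ and $\eta'(e_\beta)=1$; since both assignments take values in $\{1,\dots,k\}$ this means $\min_{\cs}\eta=\min_{\cs}\eta'=1$. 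Combined with $\eta=\eta'+c$ on $\cs$ this gives $1=1+c$, so $c=0$ and $\eta=\eta'$ on $\cs$. Ranging over all components of $G(\cc)$ yields $\eta=\eta'$ on $\cc$. Since the $\ql$ decoration is therefore a function of the order alone, a $k$-$\ql$ order carries no data beyond the underlying labelled order, which is the parenthetical restatement.

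The step I expect to demand the most care is the equivalence between Def.~\ref{def:cd}'s notion of causal disconnectedness (\emph{no relation at all} between $\cs$ and $\cs^c$) and the component decomposition of $G(\cc)$ — specifically, justifying that "no relation" is equivalent to "no link", which relies on every relation factoring through links, a feature of locally finite (here, finite) orders. I would also check the degenerate cases: an isolated element is its own component, and Def.~\ref{def:ql}-2 applied to that singleton forces its $\ql$ to equal $1$, consistently; and a component need not contain a global minimal element of $\cc$, but Def.~\ref{def:ql}-2 still furnishes a local $\ql$-$1$ element, which is all the argument uses.
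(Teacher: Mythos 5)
Your proof is correct and follows essentially the same route as the paper: show via Def.~\ref{def:ql}-1 that the difference of two assignments is constant on each irreducible causally disconnected piece (your connected components of the link graph), then use Def.~\ref{def:ql}-2 on that piece to force the constant to vanish. The only difference is presentational — you make explicit, via the refinement of any relation into a chain of links, the step the paper simply asserts when it says every element of an irreducible causally disconnected subset is ``connected'' to every other.
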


\begin{proof}
Let $\cc\in\cQnk$ and let $\eta$ and $\eta'$ be two distinct {\ql} assignments on $\cc$. Thus $\exists\,e_r \in \cc$ such that ${\eta}(e_r) \neq  {\eta}'(e_r)$. Wlog let ${\eta}'(e_r) = {\eta}(e_r) + \eta_0$ with $\eta_0>0$. If $\cs \ni e_r$ is the unique irreducible causally disconnected subset containing $e_r$, it follows from Def.~\ref{def:ql}-1 that for every $e_s \in \cs$ which is linked to $e_r$, $\eta(e_s)=\eta(e_r)\pm 1$ and $\eta'(e_s)= \eta'(e_r)\pm 1, \Rightarrow$ ${\eta}'(e_s) = {\eta}(e_s) + \eta_0$ (where the choice ``$\pm$" depends on whether $e_s$ is to the future or past of $e_r$). Since $\cs$ is irreducible, every element in $\cs$ is ``connected" via a set of future and past relations to every other element in $\cs$. Therefore $\forall\; e_r\in \cs,\;{\eta}'(e_r) = {\eta}(e_r) + \eta_0$ and  since $\eta(e_r) \geq 1 \Rightarrow \eta'(e_r) \geq 1+\eta_0$. Since $\eta'$ is also a {\ql}, $\exists\,\,e_\alpha\in \cs$ such that $\eta'(e_\alpha)=1$, which is true only if $\eta_0=0$ thus implying that $\eta=\eta'$.
\end{proof}

$\cQnk$ is therefore a proper subset of $\Omega_n$. It is evident that $\cDnk \cap \cQnk \neq \emptyset$, but that one
is not nested inside the other. This is because $\exists\;\cc\in\cDnk$ in which there are links between non-consecutive
layers, and hence Def.~\ref{def:ql}-1 is not satisfied. Conversely $\exists\;\cc\in\cQnk$ such that the elements in the
$(i+2)$th {\ql}  are not all related to those in the
$i$th {\ql} and hence Def.~\ref{def:layer}-2 is not satisfied. Importantly, since the KR orders satisfy both Def.~\ref{def:ql}-1 and Def.~\ref{def:layer}-2, they lie in $\cDnk\cap \cQnk$.
We give examples of these differences in  Fig.~\ref{fig:dharorders}. Note also that typical manifold-like orders do not lie in either $\cQnk$ or $\cDnk$.
\begin{figure}[h]
\begin{center}
\begin{tabular}{c c c c}
\includegraphics{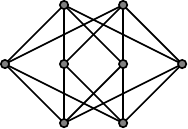}
\includegraphics{DharOrder-a.pdf}
\includegraphics{DharOrder-b.pdf}
\includegraphics{InQNotD.pdf}
\end{tabular}
\caption{The first two orders belong to both $\cDnk$ and $\cQnk$, the third to $\cDnk$ but not to $\cQnk$, and the fourth to $\cQnk$ but not to $\cDnk$.}\label{fig:dharorders}
\end{center}
\end{figure}

{Another important example is that of {\sl{{bilayer}}} orders, in which all relations are links.} Clearly, every $\cc\in\cLn$ is a {bilayer} order. Conversely, to any {bilayer} order $\cc$, we can assign the {\ql} $\eta=1$ for all minimal elements and $\eta=2$ otherwise. This satisfies Def.~\ref{def:ql}-1 and \ref{def:ql}-2 and hence $\cc\in\cLn$.

Next, we define a class of layered orders that satisfy less stringent conditions than $k$-{\ql} orders:
\begin{definition}\label{def:pql}
In a {\sl{$k$-pseudo-quasi-layer ($k$-PQL)}} order $\cc\in\Omega_n$  it is possible to assign a $k$-layer-map
$\vartheta$, such that  Def.~\ref{def:ql}-1 is satisfied but not
necessarily Def.~\ref{def:ql}-2.  We will refer to this $k$-layer-map assignment as {\sl pseudo-quasi-layers} (PQL).
\end{definition}

We can also {\it build} up a  $k$-{\ql} or $k$-{\pql}  order in $\Omega_n$  from a  $k$-layer-map   on $n$ elements,
and then add in relations that satisfy the requirements of Def.~\ref{def:ql} or Def.~\ref{def:pql}, respectively. While
the former gives us a subset $\cQnk$ of orders in $\Omega_n$, the latter generates a set of orders
$\cPnk$ of  {\sl{$k$-{\pql} orders}} which  has redundancies and is hence strictly not a subset of $\Omega_n$. Namely,
the same order in $\Omega_n$ can be generated through different  assignments of  $k$-{\pql}s to the $n$ elements.

For {bilayer}s Def.~\ref{def:layer}-2 is trivially satisfied so that there is no distinction between $2$-layers and
{$2$-\pql}s. {This distinction however does become important in generalising the analysis to include higher layer orders.}

\subsection{Causal Set Actions}\label{sec:csaction}

The CST {\sts} over $\Omega_n$ is given by
\begin{equation}\label{eq:pf}
\cZ_n= \sum_{\cc \in \Omega_n} \exp\Big(\frac{i}{\hbar}{\mS}(\cc) \Big),
\end{equation}
where ${\mS}(\cc)$ denotes a choice of causal set action. The Kleitman-Rothschild result implies that if $\mS(\cc)=0$,
the KR orders dominate $\cZ_n$. The choice of $\mS(\cc)$ is therefore crucial in taming the contribution of the KR
orders. In analogy with the continuum, the natural choice for $\mS(\cc)$ is the discrete {Einstein-Hilbert} action,
or the $d$ dimensional {BDG} action \cite{benincasa,glaser1,glaser2}
\begin{equation}\label{eq:bda}
\frac{1}{\hbar}\mathcal{S}_{BDG}^{(d)}(\epsilon,\cc) \equiv \mu(d,\epsilon) \Big(n + \sum_{j = 0}^{j_{max}(d)}
\lambda_{j}(d,\epsilon)N_{j}\Big),
\end{equation}
where $N_j$ is the number of $j$-element order intervals in $\cc$, $\epsilon$ is a new ``mesoscale" to suppress
fluctuations and $\mu(d,\epsilon)$, $\lambda_j(d,\epsilon)$ and $j_{max}(d)$ are dimension and $\epsilon$ dependent
constants  (see  \cite{benincasa,glaser1,glaser2} for details). In the limit $n\rightarrow\infty$ the expectation value of $\mS(\cc)$ over different Poisson sprinkling gives the {Einstein-Hilbert} action, upto boundary terms \cite{Buck:2015oaa,Dowker:2020xfg,Machet:2020axq}.

The ``link action" on the other hand depends only on the number of links $N_0$
\begin{equation}\label{linkact}
\frac{1}{\hbar}{\mathcal{S}}_{L}(\cc) \equiv \mu \Big(n + \lambda_{0}N_{0}\Big),
\end{equation}
which can also be obtained from the {BDG} action by putting $\lambda_j(d,\epsilon)=0,\,\,\forall\,\,j>0$.

The {\sts} Eqn.~\ref{eq:pf} can be split into a sum over mutually disjoint subsets $\{\pi_1,\pi_2,\ldots\}$ of $\Omega_n$ with $\sqcup_s \pi_s = \Omega_n$, so that
\begin{equation}
\cZ_n = \sum_{\pi_s}\cZ_n\Big|_{\pi_s},
\end{equation}
where $\cZ_n\Big|_{\pi_s}$ denotes the restriction of $\cZ_n$ to $\pi_s$. While such a split is obviously non-unique, it
allows us to isolate the contributions from specific classes of orders, like the KR orders. In \cite{carlip}, $\cZ_n$ was
restricted to the subset of {bilayer} orders.  In this work we will consider the {\sts} using the link action $\lpp$ and its restriction to the {\ql} and {\pql} orders.

\section{KR Orders and the Link Action}\label{sec:3p}
In this section we compute the leading order contribution to the {\sts} restricted to three different subsets of $\Omega_n$ each of which contains what we will loosely refer to as ``KR-like" orders.

First we will need a few more definitions. The set $\cPnk$ is not in one to one correspondence with elements in $\Omega_n$ since the same order in $\Omega_n$ can be obtained from different fillings of a given set of $k$-{\pql}s. Therefore we cannot use $\cPnk$ to partition $\lpp$. Instead we look at a subset $\cP^*_{\vec{q},n}\subset\cPnk$ defined as follows.

Let $\Ppql^k$ denote the set of all of possible assignments of {\pql}s over the set of $n$ integers. Since there are $k^n$ ways of making a {\pql} assignment, $|\Ppql^k|=k^n$. Any element of $\Ppql^k$ can be labelled by the {\sl{filling fraction}}, $\vec{q} = \left(q_1, q_2, \dots, q_k \right)$ where ${q_i}n$ is the cardinality of the $i^{\mathrm{th}}$ {\pql}, so that $\sum_{i=1}^k q_i = 1$. Since the elements are distinguishable, each choice of $\vec{q}$ comes with a multiplicity $m\left(\vec{q}\right)$ given by
\begin{equation}\label{eq:mq}
m(\vq) = \frac{n!}{(q_1n)!(q_2n)!\ldots (q_kn)!} \Rightarrow \sum_{\vec{q}} m\left(\vec{q}\right) = k^n.
\end{equation}
\begin{definition}\label{def:nla}
A {\pql} assignment $\vqs\in\Ppql^k$ is said to be {\sl{naturally labelled}} if the first {\pql} is $\{e_1,\ldots,e_{q_1 n}\}$, the second {\pql} is $\{ e_{{q_1}n + 1},\ldots, e_{({q_1 + q_2})n}\}$ and so on, and therefore is unique. We denote the set of naturally labelled $k$-{\pql} orders on $\vqs$ by $\cP^*_{\vq,n}\subset \cPnk$.
\end{definition}

We further denote the set of all $k$-{\pql} orders on $\vqs$ with $pn^2$ links by $\cP^*_{\vec{q},p,n}\subset \cP^*_{\vq,n}$ and similarly the set of all $k$-{\ql} orders with $pn^2$ links by $\cQpnk\subset\cQnk$.
\begin{claim}\label{claim2}
There exists a one-to-one map from $\cP^*_{\vec{q},p,n}$ to $\cQpnk$, so that $\cP^*_{\vq,p,n}\subset\cQpnk\subset\Omega_n$ and therefore $\left|\cP^*_{\vec{q},p,n}\right| \leq \left|\cQpnk\right|$.
\end{claim}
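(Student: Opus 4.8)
The plan is to exhibit the labelled order underlying a $k$-{\pql} order in $\cP^*_{\vq,p,n}$ as a genuine $k$-{\ql} order with the same number of links; once that is done the required map $\cP^*_{\vq,p,n}\to\cQpnk$ is just the inclusion of underlying orders (recall that $\cQnk\subset\Omega_n$ already follows from Claim~\ref{claim1}), and it is one-to-one because, the naturally labelled {\pql} assignment $\vqs$ being fixed, distinct elements of $\cP^*_{\vq,p,n}$ are distinct orders. So the whole content reduces to converting the naturally labelled {\pql} assignment $\vartheta$ on a given $\cc\in\cP^*_{\vq,p,n}$ into a {\ql} assignment $\eta$ without increasing the number of levels used.

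First I would decompose $\cc$ into its irreducible causally disconnected subsets $\{\cs_a\}$ --- the components obtained by declaring two elements equivalent when they are joined by a finite sequence of future and past relations, exactly as in the proof of Claim~\ref{claim1} --- and record the structural fact that every causally disconnected subset of $\cc$ is a union of some of the $\cs_a$. On each $\cs_a$ put $m_a:=\min\{\vartheta(e_r):e_r\in\cs_a\}$ and define $\eta(e_r):=\vartheta(e_r)-m_a+1$ for $e_r\in\cs_a$. Then I would check three points. (i) $\eta$ satisfies Def.~\ref{def:ql}-1: a link $e_r\link e_s$ forces $e_r,e_s$ into a common $\cs_a$ (being related), whence $\eta(e_s)-\eta(e_r)=\vartheta(e_s)-\vartheta(e_r)=1$ by Def.~\ref{def:ql}-1 for $\vartheta$. (ii) $\eta$ satisfies Def.~\ref{def:ql}-2: each $\cs_a$ contains a minimiser of $\vartheta$, which has $\eta=1$, and hence so does every union of the $\cs_a$. (iii) $\eta\in\{1,\dots,k\}$: since $m_a\ge 1$ we have $1\le\eta(e_r)\le\vartheta(e_r)\le k$. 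Thus $\cc$ is a $k$-{\ql} order; and since shifting {\pql}s by a component-wise constant changes no order relation, $\cc$ still carries $pn^2$ links, i.e. $\cc\in\cQpnk$. Combining this inclusion with injectivity gives $\left|\cP^*_{\vq,p,n}\right|\le\left|\cQpnk\right|$.

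The step I expect to be delicate is (ii), together with the structural claim that supports it. It is tempting --- but wrong --- to shift the whole order down by its global minimum {\pql}: a causally disconnected piece of $\cc$ lying entirely in high {\pql}s would then still violate Def.~\ref{def:ql}-2. What rescues the argument is the component-wise shift, plus the observation (the same one used implicitly in Claim~\ref{claim1}) that an arbitrary causally disconnected subset breaks up into irreducible ones, so that it suffices to produce a level-$1$ element inside each $\cs_a$ separately. A minor secondary point is verifying in (iii) that this shift never lifts a {\pql} past $k$, which is immediate from $m_a\ge 1$. Everything else --- well-definedness of $\eta$, and the fact that the underlying-order map is literally an inclusion once one uses Claim~\ref{claim1} --- is bookkeeping.
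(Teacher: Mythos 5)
Your proposal is correct and follows essentially the same route as the paper: decompose $\cc$ into irreducible causally disconnected subsets, shift the {\pql} assignment down componentwise by its minimum to obtain a genuine {\ql} assignment, and note that this leaves the underlying labelled order and its link count unchanged, so the map is injective. Your phrasing of injectivity as an inclusion of underlying orders (using Claim~\ref{claim1}) is a slightly cleaner bookkeeping of the same argument the paper makes via preservation of links.
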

\begin{proof}
Let $\cs = \{\cs_1, \cs_2,\ldots, \cs_\kappa\}$ denote the set of all irreducible causally disconnected subsets of $\cc \in \cP^*_{\vec{q},p,n}$. Let $t_i$ denote the smallest {\pql} in $\cs_i$. Make the new {\pql} assignment
\begin{equation}
{\eta}(e_r) \equiv  {\vartheta}(e_r) - t_i + 1, \quad \quad \forall e_r \in \cs_i.
\end{equation}
Under $\eta$, consecutive layers are mapped to consecutive layers. Hence Def.~\ref{def:ql}-1 is still satisfied. Moreover, for each $\cs_i$, there exists an $e_r$ such that $\vartheta(e_r) = t_i$ and hence, $\eta(e_r) = 1$. This ``shuffling down" brings all the elements in the {\pql} $t_i$ associated with $\cs_i$ to the {\pql} $\eta=1$, so that the number of {\pql}s either stays the same or decreases. Thus, $\eta$ satisfies Def.~\ref{def:ql}-2 and hence is also a {\ql}. $\eta$ therefore defines a one-to-one map $\varphi: \cP^*_{\vq,p,n} \rightarrow \cQpnk$. Since for any pair of distinct $\cc, \cc' \in \cP^*_{q, p, n}$, wlog, $\exists\,e_r,e_s\in \cc,\cc'$ such that $e_r\link e_s$ in $\cc$ but not in $\cc'$. Since Def.~\ref{def:ql}-1 is still satisfied under $\eta$, $e_r\link e_s$ in $\varphi(\cc)$ but not in $\varphi(\cc')$. Therefore $\cc\neq \cc'\Rightarrow \varphi(\cc)\neq\varphi(\cc')$, which ensures that the map $\varphi$ is one-to-one.
\end{proof}


\subsection{Review of the Results on the Bilayer Orders }\label{sec:revlc}
We start with a brief review of the results of \cite{carlip} on {bilayer} orders rephrased in the language of
{\ql}s and {\pql}s.  For {bilayer} orders, the maximum number of possible links is $n^2/4$ and occurs only for the
filling fraction $\vq_b=(1/2,1/2)$. A clever bounding argument used by the authors (and which we will also employ in Sec.~\ref{section:klevels}) shows that to leading order
\begin{equation}\label{eq:lnqpn2}
\ln|\cLpn| = \ln|\cP^*_{\vq_{b},p,n}| + o(n^2),\quad |\cP^*_{\vq_{b},p,n}| = \binom{n^2/4}{pn^2},
\end{equation}
where $N_0=pn^2 \equiv \tilde{p}n^2/4$ and $p\leq 1/4$. Using Stirling's approximation to leading order in $n$
\begin{equation}\label{eq:plogbl}
\ln|\cP^*_{\vq_{b},p,n}| = \frac{n^2}{4} h(\tilde{p}) + o(n^2),
\end{equation}
where
\begin{equation}\label{entrfunc}
h(\tilde{p}) = -\tilde{p}\ln{\tilde{p}} - (1 - \tilde{p})\ln{(1 - \tilde{p})}, \quad \quad 0 < \tilde{p} < 1,
\end{equation}
is Dhar's entropy function \cite{dhar1}.

For $p \neq p'$, $\cLpn$ and $\mathcal{Q}^2_{p',n}$ are disjoint, and one can therefore partition $\cLn$ into disjoint subsets labelled by $p$. The {\sts} for the {BDG} action or equivalently the link action is then
\begin{equation}\label{finalpartfunc}
\mathcal{Z}_n^{(BDG)}\Big|_{\cLn}=\lpp\Big|_{\cLn}=\int_0^1 \mathrm{d}\tilde{p} \exp{\left[\frac{n^2}{4}\left(i\mu\lambda_0\tilde{p} + h(\tilde{p})\right)+o(n^2)\right]}.
\end{equation}
The integral in Eqn.~\eqref{finalpartfunc} was evaluated to leading order using the method of steepest descent. For the
{BDG} action, $\mu\lambda_0<0$ it was shown  that $\mathcal{Z}_n^{(BDG)}\Big|_{\cLn}$ is exponentially suppressed when
\begin{equation}\label{eq:parange}
\tan\left(\frac{\mu\lambda_0}{2}\right) < -\sqrt{\frac{27}{4}e^{-1/2}-1}.
\end{equation}

For $\lpp\Big|_{\cLn}$ on the other hand,  it is possible for $\mu\lambda_0>0$ and hence it  is exponentially suppressed when
\begin{equation}\label{eq:parange2}
\tan\left(\frac{\mu\lambda_0}{2}\right)>\sqrt{\frac{27}{4}e^{-1/2}-1}.
\end{equation}

It is interesting to note the dimension dependence of the suppression in the {BDG} action, where $\mu\lambda_0=-\left(\frac{l}{l_p}\right)^{d-2}\beta_d C_1^{(d)}$ where $l$ is the discreteness scale \cite{glaser1,glaser2}. In $d=2$, for example, $\mu\lambda_0 = -4$ and hence there is no suppression. For all $d\geq 3$, however, by adjusting $l/l_p$, one can find a suitable suppression regime. In $d=4$, in particular, there is a suppression for all $l\gtrapprox 1.452\,l_p$.

We note that the restricted {\sts} in the  calculation of \cite{carlip} also includes {bilayer} orders which are {\it{not}}
entropically dominant, for example the $(n-1)$-element antichain with just one relation with the $n^{\mathrm{th}}$
element. These nevertheless seem important in estimating the leading order contribution to the CST {\sts}. This suggests
that the restriction to appropriate larger subsets of $\Omega_n$ may be crucial in obtaining the right semi-classical
approximation of the CST {\sts}.

\subsection{Naturally Labelled $k$-{\pql} Orders with a Fixed Filling Fraction}\label{sec:3pql}

We now consider the restriction of $\lpp$ to ${\cal{P}}^*_{\vq, n}\subset\Omega_n$, which is the set of naturally
labelled {\pql} orders with fixed filling fraction $\vq$. For the choice $\vq_{kr}=(1/4,1/2,1/4)$,
${\cal{P}}^*_{\vq_{kr}, n}$ includes the set of naturally labelled ``typical" KR orders. By this we mean
Def.~\ref{def:kr}-1 and Def.~\ref{def:kr}-3 are satisfied exactly, i.e., without the $o(n)$
fluctuations. Fig.~\ref{fig:palpha} is an example of such an order for $n=8$.

\begin{figure}[!bhtp]
\begin{center}
\includegraphics{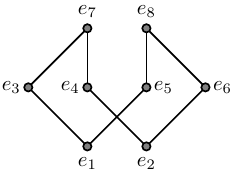}
\caption{A naturally labelled $n=8$  ``typical''  KR order with $\vq=\vq_{kr}$.}
\label{fig:palpha}
\end{center}
\end{figure}

The maximum number of links possible for any  $\cc\in\cP^*_{\vq,n}$ is
\begin{equation}\label{eq:nmax}
N_{max} = (q_1q_2+q_2q_3+\dots+q_{k-1}q_k)n^2=:\alpha(\vq) n^2,
\end{equation}
and therefore we can express the number of links $N_0$ for $\cc\in\cP^*_{\vq,n}$ by
\begin{equation}
N_0 = \tilde{p}N_{max} = pn^2, \qquad 0 \leq \tilde{p} \leq 1.
\end{equation}
We refer to $\tilde{p}$ as the {\sl{linking fraction}}. As shown in Fig.~\ref{fig:n8p4}, for a given $p$
\begin{equation}\label{eq:cardp}
\left|{\cal{P}}^*_{\vec{q}, p, n}\right|= \binom{\Nm}{N_0}= \binom{\alpha(\vq)n^2}{pn^2},
\end{equation}
which to leading order in $n$ is
\begin{equation}\label{logC}
\ln\left|\cP^*_{\vec{q},p, n}\right| = \alpha(\vq) n^2 h(\tilde{p}) + o(n^2).
\end{equation}
This is identical to  Eqn.~\ref{eq:plogbl} with $n$ replaced by $2\sqrt{\alpha(\vq)}n$, so that
\begin{equation}\label{finalpartfunc2}
\lpp\Big|_{\cP^*_{\vq,n}} = \int_0^1 \mathrm{d}\tilde{p} \exp{\left[\alpha(\vq) n^2\left(i\mu\lambda_0\tilde{p} + h(\tilde{p})\right)+o(n^2)\right]},
\end{equation}
which is the same as $\lpp\Big|_{\cLn}$ upto leading order. Thus, from the analysis in \cite{carlip}, which is unaffected by any rescaling of $n$, we see that $\lpp\Big|_{\cP^*_{\vq,n}}$ is exponentially suppressed for $\mu\lambda_0$ given by Eqn.~\eqref{eq:parange} and Eqn.~\eqref{eq:parange2}. The result is somewhat surprising, since it means that the {\sts} in \cite{carlip} captures a more general (leading order) feature of the full CST {\sts}.
\begin{figure}[!bhtp]
\begin{center}
\includegraphics{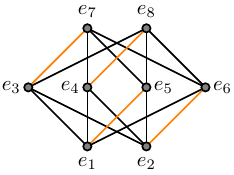}
\caption{A naturally labelled $3$-{\pql} order with $n = 8, \vq=\vq_{kr}, \tilde{p} = 0.25$. All the possible links are coloured black while those that are realised for this $\tilde{p}$ are coloured orange.}
\label{fig:n8p4}
\end{center}
\end{figure}

What we have calculated is the leading order contribution of $\cP^*_{\vq,n}$ to $\lpp$ for any $\vq$. Since the $\cP^*_{\vq,n}$ for different choices of $\vq$ overlap, one cannot however, further sum over $\vq$. Of special importance is the set $\cP^*_{\vq_{kr},n}$, which includes the naturally labelled ''typical" KR-orders. Since the contribution from KR orders near typicality could also be entropically important, we now look for a larger class of orders which contain all of the KR orders.

\subsection{$k$-{\ql} Orders}\label{section:klevels}
We now consider the more general subset $\cQnk$, which includes all $k'$-{\ql} orders for $k'\leq k$. In particular for $k\geq 3$, this includes all KR orders.

Again we begin with the subset $\cQpnk$. For $p \neq p'$, $\cQpnk$ and ${\cal{Q}}^k_{p',n}$ are disjoint, and one can therefore partition $\cQnk$ into the disjoint subsets labelled by $p$ and use $p$ as an integration factor as before. 
Unlike our earlier calculation, however, $|\cQpnk|$ is harder to obtain directly. $|\cLpn|$ was obtained in
\cite{carlip} by saturating both a lower and an upper bound. We employ these same methods here.

Let $\cP^k_{p,n}\subset\cPnk$ be the set of all $k$-{\pql} orders with $pn^2$ links. Since $\cP^k_{p,n}$ contains different {\pql} labellings of the same order, $\cQpnk\subset\cP^k_{p,n}$. From Claim \ref{claim2}, for any filling fraction $\vq$,
\begin{equation}\label{eq:ineq1}
|\cP^*_{\vq,p,n}|\leq|\cQpnk|\leq|\cP^k_{p,n}|.
\end{equation}
In order to tighten these bounds, we vary over $\vq$ and see that $\alpha(\vq)$ takes the maximum value $\alpha_{m} = 1/4$ for $\vq_x=(1/4-x,1/2,1/4+x)$ where $-1/4\leq x\leq 1/4$. This includes the two configurations: the symmetric $k=2$ case, with $\vq_b=(1/2,1/2)$ and the ``typical KR" $k=3$ case, with $\vq_{kr}=(1/4,1/2,1/4)$. Since $|\cP^*_{\vec{q},p,n}|$ is a monotonically increasing function of $\alpha(\vq)$ for fixed $n$ and $p$, it achieves a maximum at $\alpha_m$. Let $\vec{q}_0$ denote one of these maximising configurations.

Thus
\begin{equation}\label{eq:ineq2}
|\cP^k_{p,n}|=\sum_{\vq} m(\vq) |\cP^*_{\vec{q}, p, n}| \leq \sum_{\vec{q}} m(\vq) \left|\cP^*_{\vq_0, p, n}\right| = k^n \left|\cP^*_{\vq_0, p, n}\right|,
\end{equation}
where $m(\vq)$ is defined in Eqn.~\ref{eq:mq} which implies that
\begin{equation}\label{cardineq3}
|\cP^*_{\vec{q_0}, p, n}| \leq |\cQpnk| \leq k^n |\cP^*_{\vec{q_0}, p, n}|.
\end{equation}
Note that this reduces to the calculation of \cite{carlip} when $k=2$. The factor $k^n$ contributes only to the subleading order of $n$ and therefore
\begin{equation}\label{eq:lnqpnk}
\ln|\cQpnk| = \ln|\cP^*_{\vec{q_0}, p, n}| + o(n^2).
\end{equation}
Thus we see that the leading order contribution to $\lpp\Big|_{\cQnk}$ for {\it{any}} $k\geq 3$ comes from a subset of $\Omega_n$ which includes the KR orders as well as the symmetric {bilayer} orders. Moreover, it is again the same as $\lpp\Big|_{\cLn}$ to leading order. Thus we see that the link action serves to suppress the contribution from all KR orders for $\mu\lambda_0$ satisfying Eqn.~\eqref{eq:parange} and \eqref{eq:parange2}.

\subsection{``KR + dust"}\label{sec:krlike}
In order to illustrate the non-triviality of the previous calculations, we consider a wholly different class of orders, which contains ``typical KR" orders of all cardinalities $n'<n$, along with the ``dust" of an $(n-n')$-element antichain.
\begin{definition}\label{def:cchin}
Let $\cc\in\mathcal{Q}^3_n$ admit a partition $\cc=\cs_1\sqcup \cs_2$ so that $\cs_1$ is a $\chi n$ order and $\cs_2$ is a $(1-\chi)n$-element antichain, where $0\leq\chi\leq 1$ so that:
\begin{enumerate}
\item $e_r\in \cs_2 \Rightarrow \eta(e_r)=1$.
\item The $\chi n$ elements of $\cs_1$ are assigned {\ql}s with filling fraction $\vq=(1/4,1/2,1/4)$.
\item Each element in $\cs_1$ with $\eta=1$ is linked to exactly ${\chi}n/4$ elements with $\eta=2$, and similarly each element with $\eta=3$ is linked to exactly ${\chi}n/4$ elements with $\eta=2$.
\item $e_r\in \cs_1$ is minimal $\Leftrightarrow \eta(e_r)=1$.
\end{enumerate}
Let $\krd_{\chi,n}$ denote the set of these ``KR + dust" orders, for a given $\chi$. Fig.~\ref{fig:krlike24} shows an example of such an order.
\end{definition}
Note that unlike the previous two cases, this fixes not only the filling fraction $\vq$, but also the number of links $N_0=(\chi n)^2/8$ in $\krd_{\chi,n}$. For $\chi \neq \chi'$, $\krd_{\chi,n}$ and $\krd_{\chi',n}$ are disjoint and therefore $\krd_n = \sqcup_{\chi} \krd_{\chi,n} \subset \mathcal{Q}^3_n$.
Thus
\begin{equation}
\lpp\Big|_{\krd_n} \equiv \int_0^1 \mathrm{d}\chi\, |\krd_{\chi,n}| \exp{(i\mS_L(\chi,n))}.
\end{equation}
\begin{figure}[h]
\begin{center}
\includegraphics{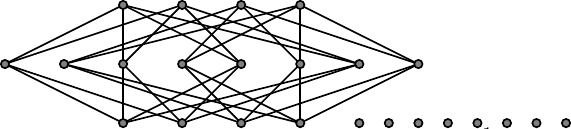}
\caption{A 24-element ``KR + dust" order with $\chi = 2/3$.}
\label{fig:krlike24}
\end{center}
\end{figure}

In order to evaluate $|\krd_{\chi,n}|$ we consider the larger set of orders $\krd'_{\chi,n}\supset \krd_{\chi,n}$ which satisfy Def.~\ref{def:cchin}-1-3 but not necessarily Def.~\ref{def:cchin}-4. Therefore $\krd'_{\chi,n}\subset\cP^3_n$ since the elements of any $\cc\in\krd'_{\chi,n}$ can be assigned {\pql}s but not necessarily {\ql}s. Let $\mathcal{G}_{\chi,n}\subset\krd'_{\chi,n}$ such that for every $\cc\in \mathcal{G}_{\chi,n}$, there is at least one element in the second {\pql} which is not linked to any element in the first {\pql}. From Def.~\ref{def:cchin} we see that $\krd_{\chi,n} = \krd'_{\chi,n}\backslash \mathcal{G}_{\chi,n}$ and therefore

\begin{equation}\label{eq:cardccn}
|\krd_{\chi,n}| = |\krd'_{\chi,n}| - |\mathcal{G}_{\chi,n}|.
\end{equation}
Let us start with computing $|\krd'_{\chi,n}|$. The number of ways of separating out $(1-\chi)n$ elements for $\cs_2$ is $\binom{n}{\chi n}$ and the number of ways of distributing the remaining ${\chi}n$ elements so that it satisfies Def.~\ref{def:cchin}-2 is $\binom{\chi n}{\chi n/4}\times \binom{3\chi n/4}{\chi n/4}$. Additionally, the number of ways of linking the elements so that Def.~\ref{def:cchin}-3 is satisfied is $\binom{\chi n/2}{ \chi n/4}^{{\chi}n/2}$. Hence,
\begin{equation}
\left|\krd'_{\chi, n}\right| = \binom{n}{\chi n} \binom{\chi n}{\chi n/4} \binom{3{\chi}n/4}{{\chi}n/4} \binom{{\chi}n/2}{{\chi}n/4}^{{\chi}n/2}.
\end{equation}
Now we compute $|\mathcal{G}_{\chi,n}|$. The number of ways of separating out $(1-\chi)n$ elements for $\cs_2$ and the number of ways of distributing $\chi n$ elements so that it satisfies Def.~\ref{def:cchin}-2 is the same as that for $|\krd'_{\chi,n}|$ and therefore
\begin{equation}
|\mathcal{G}_{\chi,n}| = \binom{n}{\chi n} \binom{{\chi}n}{{\chi}n/4} \binom{3{\chi}n/4}{{\chi}n/4} A_{\chi,n},
\end{equation}
where
\begin{equation}
A_{\chi,n} \equiv \frac{{\chi}n}{2} \binom{{\chi}n/2-1}{{\chi}n/4}^{{\chi}n/4} \binom{{\chi}n/2}{{\chi}n/4}^{{\chi}n/4} = \frac{{\chi}n}{2} \cdot 2^{-{\chi}n/4} \binom{{\chi}n/2}{{\chi}n/4}^{{\chi}n/2},
\end{equation}
denotes the number of ways in which the links can be assigned, in order to satisfy Def.~\ref{def:cchin}-3, with at least one element in the second {\pql} not linked to any element in the first {\pql}. Thus
\begin{equation}
|\krd_{\chi,n}|= \binom{n}{\chi n} \binom{\chi n}{\chi n/4} \binom{3{\chi}n/4}{{\chi}n/4} \binom{{\chi}n/2}{{\chi}n/4}^{{\chi}n/2}\left(1- \frac{{\chi}n}{2} \cdot 2^{-{\chi}n/4} \right).
\end{equation}
In the limit of large $n$, the second term is highly suppressed so that
\begin{eqnarray}
\left|\krd_{\chi, n}\right|& \approx& \binom{n}{\chi n} \binom{\chi n}{\chi n/4} \binom{3{\chi}n/4}{{\chi}n/4} \binom{{\chi}n/2}{{\chi}n/4}^{{\chi}n/2}\nonumber\\
&\approx& \frac{n!}{((1-\chi)n)!} \biggl( ({\chi}n/2)!\biggr) ^{{\chi}n/2 - 1} \biggl( ({\chi}n/4)!\biggr)^{-({\chi}n + 2)}.
\end{eqnarray}
Using Stirling's approximation,
\begin{equation}
\ln{\left|\krd_{\chi,n}\right|} = \frac{({\chi}n)^2}{4} \ln{2} - \frac{{\chi}n}{4} \ln{({\chi}n)} + o(n\ln n),
\end{equation}
and hence
\begin{eqnarray}\label{eq0.3}
\lpp\Big|_{\krd_n} &=& \int_0^1 \mathrm{d}{\chi} \left|\krd_{\chi,n}\right| \exp\Big(i\mu \Big(n + \lambda_0\frac{({\chi}n)^{2}}{8} \Big)\Big)\nonumber\\
&=& \int_0^1 \mathrm{d}{\chi} \exp\Big(\frac{({\chi}n)^2}{4} \Big(\ln{2} + \frac{i\mu\lambda_0}{2}\Big) +o(n^2)\Big)\nonumber\\
&\approx&-i\frac{\sqrt{\pi}}{n}\;\frac{\mathrm{erf}\left(i\frac{n}{2}\sqrt{\ln2+i\frac{\mu\lambda_0}{2}}\right)}{\sqrt{\ln2+i\frac{\mu\lambda_0}{2}}}.
\end{eqnarray}
This is divergent in the limit of large $n$ for {\it{any}} range of the parameters $\mu$ and $\lambda_0$ which means that $\krd_n\subset\Omega_n$ is not suppressed in the CST {\sts}, even though it contains ``typical" KR orders with $n'\leq n$. This underlines the importance of the choice of subset to which $\lpp$ is restricted.

\section{Conclusions}\label{sec:con}
In this work we have shown that the link action can suppress the entropy of KR orders in the CST {\sts} $\lpp$ using
techniques very similar to those used in \cite{carlip} to show the suppression of {bilayer} orders. In particular, the leading
order contributions to $\lpp$ in Sec.~\ref{sec:3pql} and Sec.~\ref{section:klevels} are shown to be the same as that in
\cite{carlip}. In the calculation of Sec.~\ref{section:klevels}, this can be traced to the fact that the $N_{max}$ is
maximised when the filling fraction is $\vq=(1/4-x,1/2,1/4+x),\;-1/4\leq x\leq 1/4$, which includes the symmetric
{bilayer} orders. Thus it seems that the calculations  for {bilayer} orders in \cite{carlip} already captures the essence of the contribution
from the KR orders!

We have also examined the contribution to $\lpp$ of naturally labelled $k$-{\pql} orders for any $k$ with fixed $\vq$. As shown in Sec.~\ref{sec:3pql}, to leading order this too reduces to $\lpp\Big|_{\cLn}$ and is suppressed for the same range of parameters.

In order to emphasise the non-triviality of these results, we also considered  the restriction of $\lpp$ to the ``KR +
dust" subset of $\Omega_n$. We find that there is {\it{no}} parameter range in which this contribution is suppressed. This example illustrates the importance of the choice of subset to which the CST {\sts} is restricted and suggests that even to leading order in $n$, subtle cancellations of the phases are important.

It would be of interest to extend these results to the other actions, like the relational action, or better still the
{BDG} action. In order to perform a similar calculation for the {BDG} action, one would have to count the
class of iso-action $k$-{\ql} orders rather than those with fixed $N_j$.  Even a leading order estimation would be of
great value, but this is beyond the scope of the present work.

We now present some numerical evidence that supports the idea suggested in \cite{carlip} that our link action result may be
relevant to leading order for KR orders even for the {BDG} action. {We generate an ensemble of naturally
  labelled $3$-{\pql} orders with different filling fractions $\vq$ including $\vq_{kr} = (1/4,1/2,1/4)$. The orders for
  each $\vq$ are generated by randomly choosing the links for a given linking fraction $\tilde{p}$. We examine the
  expectation values of the ratios of the $j$-element order interval $N_j$ for $j=1,2$ and $3$, to the links $N_0$ as a
  function of the linking fraction $\tilde{p}$. Typical KR orders have a linking fraction $\tilde{p}\sim 1/2$ and a
  filling fraction of $\vq_{kr}$. From Fig.~\ref{fig:6}, we
  see that  as  $ \tilde{p}$ approaches $ \frac{1}{2}$ from below,  $\langle N_j/N_0\rangle\rightarrow 0$ for $j=1,2,3$,
  for all these choices of $q$, with the fastest decay for $\vq_{kr}$. Moreover, this behaviour improves substantially
  with  $n$, as seen by comparing Fig.~\ref{fig:6}  and  \ref{fig:7}.   This suggests that the BDG
  action for KR orders can be replaced by the link action.}

\begin{figure}[h!]
\begin{center}
\includegraphics[height=10.5cm]{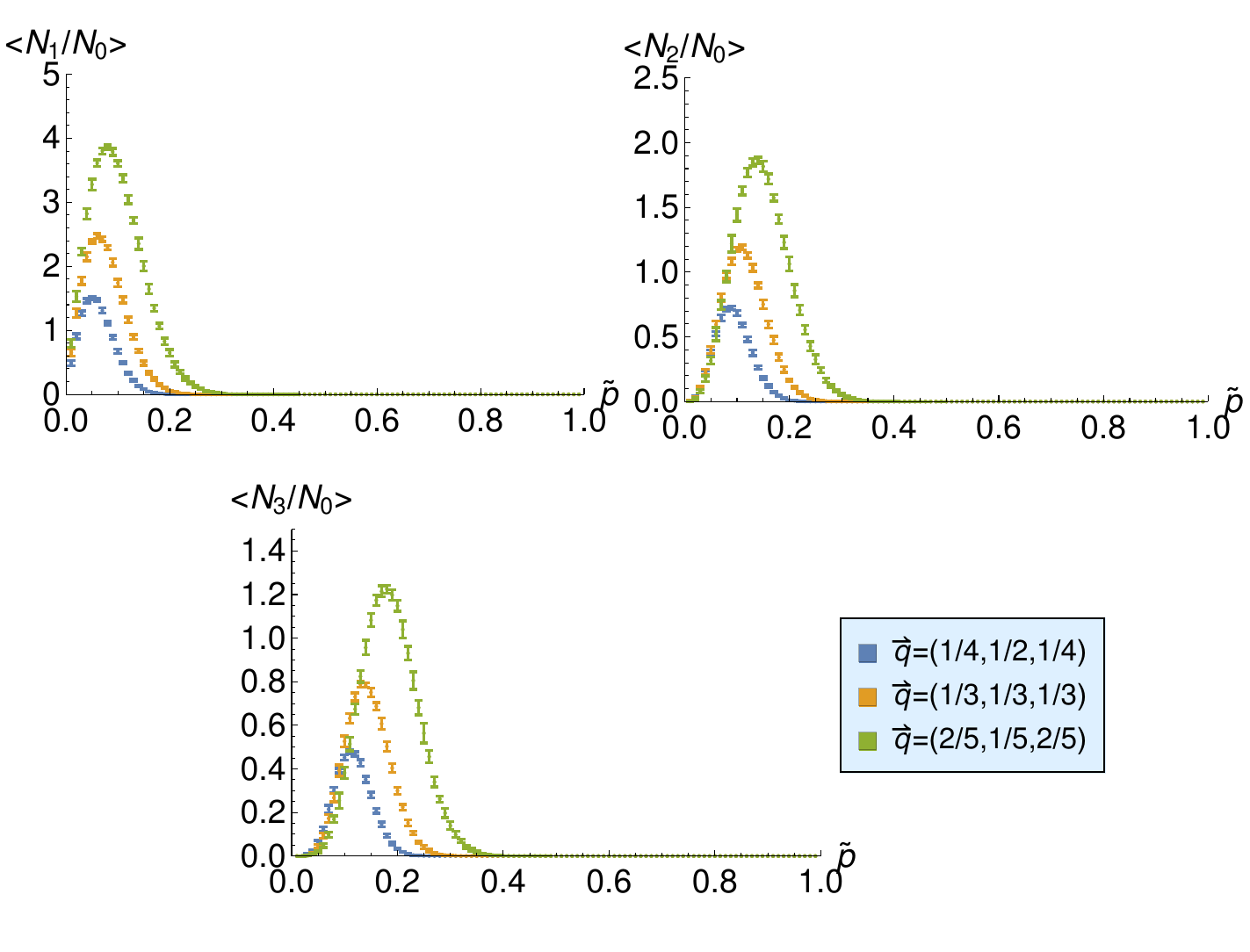}
\caption{The ratios $\langle N_1/N_0\rangle$, $\langle N_2/N_0\rangle$ and $\langle N_3/N_0\rangle$ vs the linking fraction $\tilde{p}$ for $n\approx 400$ for three different types of 3-{\pql} orders $\vq_{kr}$, $\vq=(1/3,1/3,1/3)$ and $\vq=(2/5,1/5,2/5)$. The average is taken over twenty samples.}\label{fig:6}
\end{center}
\end{figure}
\begin{figure}[h!]
\begin{center}
\includegraphics[height=5cm]{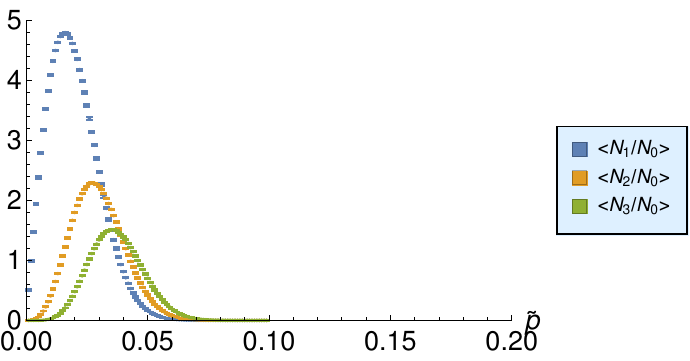}
\caption{The ratios $\langle N_1/N_0\rangle$, $\langle N_2/N_0\rangle$ and $\langle N_3/N_0\rangle$ vs the linking fraction $\tilde{p}$ for n=4000 with $\vq_{kr}$. The average is taken over twenty samples.}\label{fig:7}
\end{center}
\end{figure}


Noting that the {BDG} action is linear in $N_i$,   it is also of interest to see how non-linearities could affect our analysis. For a non-linear link action, any term proportional to $N_0^m=p^mn^{2m}$ with $m>1$ will dominate the $n^2$ order terms in the large $n$ limit, including the entropy, as is evident from Eqn.~\ref{finalpartfunc}. This gives a suppression for any choice of coefficient in the large $n$ limit, albeit not exponentially. This suggests that non-linear corrections to the {BDG} action could play an important role in entropy suppression in the deep quantum regime. This non-linearity must however give rise to the linear {BDG} action in the low energy limit of the theory.

In CST, the full sample space $\Omega_n$ also contains causal sets which are approximated by continuum spacetimes of dimension $d\neq 4$. If, for example, Kaluza-Klein spacetimes of dimension $4+D$ are the right continuum approximation of CST in the low energy limit, one has to explain why causal sets which approximate spacetimes of other dimensions are suppressed. It is possible that numerical investigations might provide useful clues, and are a concrete way forward.
\\\\
{\bf\large Acknowledgements}

SS is supported in part by a Visiting Fellowship at the Perimeter Institute.  Re-search at Perimeter Institute is supported in part by the Government of Canada through the  Department  of  Innovation,  Science  and  Economic  Development  Canada  and  by  the Province of Ontario through the Ministry of Colleges and Universities. AAS was supported in part by the Visiting Scholar Program at the Raman Research Institute.

\bibliography{mss}

\begin{thebibliography}{10}

\bibitem{Bombelli:1987aa}
L.~Bombelli, J.~Lee, D.~Meyer, and R.~Sorkin, ``{Space-Time as a Causal Set},''
  {\em Phys. Rev. Lett.}, vol.~59, pp.~521--524, 1987.

\bibitem{Surya:2019ndm}
S.~Surya, ``{The causal set approach to quantum gravity},'' {\em Living Rev.
  Rel.}, vol.~22, no.~1, p.~5, 2019.

\bibitem{kleitman1}
D.~J. Kleitman and B.~L. Rothschild, ``{Asymptotic enumeration of partial
  orders on a finite set},'' {\em Trans. Amer. Math. Soc.}, vol.~205,
  pp.~205--220, 1975.

\bibitem{dhar1}
D.~Dhar, ``{Entropy and phase transitions in partially ordered sets},'' {\em
  Journal of Mathematical Physics}, vol.~19, pp.~1711--1713, 1978.

\bibitem{carlip}
S.~Loomis and S.~Carlip, ``{Suppression of non-manifold-like sets in the causal
  set path integral},'' {\em Class. Quant. Grav.}, vol.~35, no.~2, p.~024002,
  2018.

\bibitem{benincasa}
D.~M. Benincasa and F.~Dowker, ``{The Scalar Curvature of a Causal Set},'' {\em
  Phys. Rev. Lett.}, vol.~104, p.~181301, 2010.

\bibitem{glaser1}
F.~Dowker and L.~Glaser, ``{Causal set d'Alembertians for various
  dimensions},'' {\em Class. Quant. Grav.}, vol.~30, p.~195016, 2013.

\bibitem{glaser2}
L.~Glaser, ``{A closed form expression for the causal set d'Alembertian},''
  {\em Class. Quant. Grav.}, vol.~31, p.~095007, 2014.

\bibitem{astrid}
A.~Eichhorn, ``{Towards coarse graining of discrete Lorentzian quantum
  gravity},'' {\em Class. Quant. Grav.}, vol.~35, no.~4, p.~044001, 2018.

\bibitem{promel}
H.~J. Pr{\"o}mel, A.~Steger, and A.~Taraz, ``{Phase Transitions in the
  Evolution of Partial Orders},'' {\em Journal of Combinatorial Theory, Series
  A}, vol.~94, pp.~230--275, 2001.

\bibitem{brightwell}
G.~Brightwell, H.~J. Pr{\"o}mel, and A.~Steger, ``{The Average Number of Linear
  Extensions of a Partial Order},'' {\em Journal of Combinatorial Theory,
  Series A}, vol.~73, pp.~193--206, 1996.

\bibitem{henson}
J.~Henson, D.~P. Rideout, R.~D. Sorkin, and S.~Surya, ``{Onset of the
  Asymptotic Regime for Finite Orders}.'' arXiv:1504.05902, 4 2015.

\bibitem{Buck:2015oaa}
M.~Buck, F.~Dowker, I.~Jubb, and S.~Surya, ``{Boundary Terms for Causal
  Sets},'' {\em Class. Quant. Grav.}, vol.~32, no.~20, p.~205004, 2015.

\bibitem{Dowker:2020xfg}
F.~Dowker, ``{Boundary contributions in the causal set action}.''
  arXiv:2007.13206, 7 2020.

\bibitem{Machet:2020axq}
L.~Machet and J.~Wang, ``{On the continuum limit of Benincasa-Dowker-Glaser
  causal set action}.'' arXiv:2007.13192, 7 2020.

\end{thebibliography}
\bibliographystyle{ieeetr}

\end{document}